\documentclass{llncs}
\usepackage[english]{babel}
\usepackage[T1]{fontenc}
\usepackage{latexsym,amsmath}
\usepackage{amssymb}
\usepackage[dvips]{graphicx}


\newtheorem{thm}{Theorem}

\title{Should Static Search Trees  Ever Be Unbalanced?}

\author{ Prosenjit Bose \and Karim Dou\"ieb\thanks{Research partially supported by NSERC and MRI.}}
\institute{ School of Computer Science, Carleton University, Herzberg Building \\
1125 Colonel By Drive, Ottawa, Ontario, K1S 5B6 Canada \\
\email{\{jit,karim\}@cg.scs.carleton.ca }\\
http://cg.scs.carleton.ca 
}

\date{}

\begin{document}

\sloppy
\maketitle

\begin{abstract}
In this paper we study the question of whether or not a static search tree should ever be unbalanced. We present several methods to restructure an unbalanced k-ary search tree $T$ into a new tree $R$ that preserves many of the properties of $T$ while having a height of $\log_k n +1$ which is one unit off of the optimal height. More specifically, we show that it is possible to ensure that the depth of the elements in $R$ is no more than their depth in $T$ plus at most $\log_k \log_k n +2$. At the same time it is possible to guarantee that the average access time $P(R)$ in tree $R$ is no more than the average access time $P(T)$ in tree $T$ plus $O(\log_k P(T))$. This suggests that for most applications, a balanced tree is always a better option than an unbalanced one since the balanced tree has similar average access time and much better worst case access time.
\end{abstract}

\section{Introduction}
The dictionary problem is fundamental in computer science, it asks for a data structure that efficiently stores and retrieves data. Binary search trees are simple, powerful and commonly used dictionaries.
The problem of building static search trees has been intensively studied in the past decades. Depending on the performance required one can build a perfectly balanced search tree that guarantees an optimal worst-case search time or one can build a biased search tree matching the entropy bound thereby providing an optimal expected search time. The search tree that minimizes the expected search cost can be unbalanced thereby behaving badly in the worst-case. Thus one may prefer to build a search tree of bounded height, i.e., with a certain guarantee on the worst-case search time that also minimizes the expected search time. In this paper we address the issue of the increase in the expected search cost imposed by restricting the height of the constructed tree. 

Since a search tree $T$ minimizing the expected search cost may behave badly in worst-case, one may want to construct another tree $R$ on the same set of keys in such a way that the worst-case search time is improved but the expected search time does not differ too much from the initial tree. One way to achieve this is to guarantee that $R$ has bounded height and that the depth of a key in $R$ is not much more than its depth in $T$. This is known as the \emph{restructuring search tree} problem.  Moreover, the problem of designing such search trees is directly related to the design of good codes. Thus the results obtained in this paper on search trees also has straightforward applications in coding theory.

\subsubsection*{Preliminaries}

Consider the set $x_1,x_2,\ldots,x_n$ of keys contained in a search tree $T$. We are given $2n+1$ weights $p_1,p_2,\ldots,p_n$ and $q_0,q_1,\dots,q_n$ such that $\sum_{i=1}^n p_i+\sum_{i=0}^n q_i=1$. Here, $p_i$ is the probability to query the key $x_i$ (successful search) and $q_i$ is the probability to query a key lying between $x_i$ and $x_{i+1}$ (unsuccessful search), $q_0$ and $q_n$ are the probabilities to query a key that is less or greater, respectively, than any key contained in the tree. 

Static \emph{multiway search trees} (or $k$-ary trees) generalize most of the other static search tree structures. A successful search ends up in an internal node of a $k$-ary tree that contains the requested key. Each internal node of a $k$-ary tree contains at most $k-1$ keys and has between 1 and $k$ children. An unsuccessful search ends up in one of the $n+1$ leaves of the $k$-ary tree. A leaf in a $k$-ary tree does not contain any key. The \emph{weighted path length} of a $k$-ary tree $T$ (referred to as path length in the remainder of this paper), a measure of the average number of nodes traversed during a search, is defined as 
\begin{equation}
\label{PL}
P(T)=\sum_{i=1}^n p_i (d_T(x_i)+1) + \sum_{i=0}^n q_i d_T(x_{i-1},x_{i}),
\end{equation}
where $d_T(x_i)$ is the depth in terms of number of links from the root node to the internal node containing the key $x_i$, $d_T(x_{i-1},x_{i})$ is the depth of the leaf reached at the end of the unsuccessful search for a key lying between $x_{i-1}$ and $x_i$. In the context of binary search trees (when $k=2$) in the comparisons-based model, the path length corresponds to the average number of comparisons performed during a search. In the external memory model, the path length corresponds to the average number of I/Os performed during a search in the case where each node is stored as one disk  block. Note that this is the usual way to store a multiway search tree in external memory. 

\subsection{Related work}


\subsubsection*{Optimal search trees}
Knuth~\cite{optbst} showed that an optimal binary search tree can be built in $O(n^2)$ time using $O(n^2)$ space. Mehlhorn~\cite{mehlornbestbound} gave an $O(n)$ time algorithm to build a binary search tree that is near-optimal. Concerning the more general case of $k$-ary trees, Vaishnavi~\emph{et al.}~\cite{optbtrees} showed that an optimal $k$-ary tree can be built in $O(k n^3)$ time. Becker~\cite{newoptbtrees} gave an $O(k n^\alpha)$ time algorithm, with $\alpha= 2+\log_{k} 2$, to build an optimal $B$-tree (subclass of $k$-ary tree) that satisfies the original constraints fixed by Bayer and McCreight~\cite{btrees}. These constraints require that every leaf in the $B$-tree have the same depth and that every internal node contains between $k/2$ and $k$ keys except for the root node. In the remainder of this paper, we consider a more general model of $k$-ary tree. The only constraint is that an internal node contains at most $k-1$ keys. Recently Bose and Dou\"ieb~\cite{nearoptbtree} presented a method to build a $k$-ary tree in $O(n)$ time (independent of $k$) that gives the best upper bound on the path length of a $k$-ary tree and produces a near-optimal $k$-ary tree for any $k\geq 2$. 

The problem of building an optimal search tree when only unsuccessful searches occur, i.e., when $\sum_{i=1}^n p_i=0$, is known as the optimal \emph{alphabetic search tree} problem. Hu and Tucker~\cite{n2optalphatrees} developed an $O(n^2)$ time and $O(n)$ space algorithm for constructing an optimal alphabetic binary search tree. This was improved by two other algorithms, the first one was by Knuth~\cite{knuthart} and the second by Garsia and Wachs~\cite{Garsiaoptalphatrees}. Both algorithms use $O(n \log n)$ time and $O(n)$ space. 

\subsubsection*{Optimal search trees with restricted height}
The problem of building an optimal binary search tree with restricted maximal height has been addressed by Garey~\cite{Garey}. The best algorithms solving this problem have been independently developed by Wessner~\cite{Wessner} and Itai~\cite{Itai}. They both produce the optimal binary search tree, with  $h$ as the height restriction, in $O(h n^2)$ time. For the problem of building an optimal alphabetic binary search tree with restricted maximal height $h$, Larmore and Przytycka~\cite{larmoreAlpha} presented a $O(hn \log n)$ time algorithm. 

\subsubsection*{Restructuring search trees}
The problem of restructuring a search tree $T$ consists of building another tree $R$, on the same set of keys, with restricted height such that the path length of $R$ is as close as possible to the path length of $T$. The \emph{drop} of a node $x$ is defined as $\Delta(x)=d_R(x)-d_T(x)$. This problem was initially posed by Bose. Evans and Kirkpatrick~\cite{restructuringordered} developed a technique to restructure a binary search tree $T$ into a tree $R$ of height $\lceil \log n \rceil +1$ such that $\Delta(x)\leq \log \log n$ for every node $x$ in $T$. They also showed that restructuring an alphabetic binary search tree can be done with the guarantee that $\Delta(x)\leq 2$ for every node $x$. Their work mainly focused on understanding the tradeoff between the height restriction of the restructured tree and the worst-case drop realized by a node. Gagie~\cite{Gagie} gave an alternate way to restructure a binary search tree into a tree of height $\log n +1$ that guarantees a slightly larger worst-case drop but aims at reducing the total drop as opposed to the worst case individual drop. He provided an algorithm where the path length of the restructured tree $R$ satisfies the following $P(R)\leq P(T)+(1+\epsilon)\log(P(T)+1)+\log((1/\epsilon)+1)+2$ with $1<\epsilon \leq 2$.

\subsection{Our results}
We present several methods to restructure a binary search tree that improves the previous best upper bounds on both the local drop of an individual node as well as the total drop of all nodes. The methods and the proofs are all based on a simple but general technique. We show that our method generalizes and are the first to study how to restructure multiway search trees (previous work only considers binary search trees).  Our results are then used to prove new tighter upper bounds on the path length of optimal height-restricted multiway search trees.

In Section~\ref{worstcasesection}, we present new tree restructuring methods that focus on reducing the worst-case drop of any given key. We first focus our attention on restructuring a given alphabetic $k$-ary search tree into another one of height $\log_k n +1$ such that at least a quarter of the leaves do not drop at all, the maximum drop realized by all but one of the leaves is at most $1$ and
exactly one leaf drops at most $2$ levels. Second, we present a restructuring method for the general case of $k$-ary search trees that builds another k-ary tree on the same keys with a guaranted worst-case drop of at most $\log_k \log_k n$. In fact, this method potentially gives a better bound since it takes into consideration the balance of the initial tree. The more unbalanced the initial tree, the better the guarantee on the drop. For example, if the initial tree is a path, then this method guarantees that the worst-case drop is at most 1.

In Section~\ref{depthpropdrop}, we develop a method focused on the relative drop. By this, we mean that in the worst case, the amount that a node will drop is proportional to its depth in the original tree as opposed to being proportional to the number of nodes in the tree. For a given node $x_i$, the maximum drop is at most $\log_{k} (d_T(x_i)+1) + (1+\epsilon) \log_{k} \log (d_T(x_i)+2)+\log_k \frac{1+\epsilon}{\epsilon}+1$. As a consequence of this, the path length of the restructured tree is close to the path length of the initial tree but the restructured tree has height at most $\log_k n +1$. In Section~\ref{hybrid} we combine the worst-case and relative drop approaches to obtain a hybrid method that guarantees simultaneously the best upper bounds in term of relative and worst-case drop plus a small constant. 

Finally we show in Section~\ref{upperheight} how the results on relative node drop can be used to obtain tighter upper bounds on the path length of optimal height-restricted multiway search trees.

\section{Restructuring multiway search trees}
Restructuring a search tree $T$ consists of building a new tree $R$, on the same set of keys, such that $R$ satisfies a precise constraint on its height. The problem is to determine how the tree $R$ differs from $T$ and how it is efficiently constructed. The main idea of our approach, similar to \cite{Gagie}, is to define a weight distribution on the keys based on their depth in the initial tree $T$. The weights of the keys are defined differently depending on what kind of guarantee on the drop we want to achieve. We distinguish between two types of guarantees on the drop: \emph{local} or \emph{global}. A local guarantee specifies the maximum drop realized by any node. A global guarantee specifies the maximum increase of the path length. Given these newly defined weights, we  build a near-optimal search tree using a technique described in the next section. 
 
\subsection{Method to construct near-optimal multiway search tees}
\label{nearlyopthere}
We describe a technique to build near-optimal multiway search trees, developed by Bose and Dou\"ieb \cite{nearoptbtree} and initially inspired from Mehlhorn's technique~\cite{mehlornbestbound} when access probabilities are known. This technique guarantees the best theoretical upper bound on the path length of optimal multiway search trees. 
Note that any other technique to build search trees can be used for the purpose of restructuring trees but we use~\cite{nearoptbtree} because it guarantees the best properties. 

Let $p_1,p_2,\ldots,p_n$ be the access probabilities of the internal nodes and $q_0,q_1,\dots,q_n$ be the access probabilities of the leaves. Let $T'$ be the tree built with the method \cite{nearoptbtree}. The following two lemmas characterize the depth of the elements in $T'$, we distinguish the cases where $T'$ has a branching number equal to $2$ or when it is greater. We define the value $m=\max\{n-3P,P\}-1\geq \frac{n}{4}-1$ where $P$ is the number of increasing or decreasing sequences in the access probability distribution on the ordered leaves. The value $q_{rank[i]}$ is the $i$th smallest access probability among the leaves except for the extremal ones (i.e. we exclude $(-\infty,x_1)$ and $(x_n,\infty)$ from consideration).


\begin{lemma}
\label{depthUB-Btrees}
The depth of the elements in $T'$ satisfy the following
\vspace{-0.2cm}
\begin{eqnarray*}
&&d_{T'}(x_i)\leq \lfloor \log_{k} \frac{1}{p_i+q_{min}}\rfloor \qquad \quad  {\bf for} \quad i=1,\ldots,n \, , \\
&&d_{T'}(x_{i-1},x_i)\leq \lfloor \log_{k} \frac{2}{q_i}\rfloor +1  \qquad \, {\bf for} \quad i=0,\ldots,n.
\end{eqnarray*}
\end{lemma}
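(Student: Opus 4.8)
The plan is to establish the two depth bounds by appealing to the near-optimal construction of~\cite{nearoptbtree}. The method of~\cite{nearoptbtree} is inspired by Mehlhorn's technique, so I would first recall how that construction places an element: one partitions the weight interval $[0,1)$ according to the cumulative sums of the access probabilities, and at each node one selects as the set of keys stored there the (at most $k-1$) keys whose associated subintervals straddle the $k$ equally spaced ``cut points'' of the current interval. Recursing into each of the $k$ resulting subintervals yields the tree $T'$. The key quantitative fact is that after descending $d$ levels the length of the interval associated with the current node is at most $k^{-d}$ (each level divides by $k$), and a key $x_i$ can only be placed at a node whose interval both contains a cut point falling inside the subinterval of $x_i$ and has not yet placed $x_i$ at a shallower level.

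\textbf{Proof sketch.} For the successful-search bound, I would argue that the key $x_i$ must be stored at some node whose associated interval has length strictly greater than the ``mass'' that forces $x_i$ to be selected; more precisely, once the current interval has length at most $p_i + q_{\min}$ (the weight of $x_i$ together with the smallest adjacent leaf weight, which guarantees $x_i$'s subinterval is long enough to capture a cut point), $x_i$ is necessarily chosen at that node or an ancestor. Since the interval length at depth $d$ is $k^{-d}$, the deepest level at which $x_i$ can still fail to be placed is the largest $d$ with $k^{-d} > p_i + q_{\min}$, i.e. $d < \log_k \frac{1}{p_i+q_{\min}}$; hence $d_{T'}(x_i) \le \lfloor \log_k \frac{1}{p_i+q_{\min}} \rfloor$. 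For the unsuccessful-search bound, the leaf $(x_{i-1},x_i)$ corresponds to a subinterval of length $q_i$, and a leaf is reached exactly when the recursion isolates that subinterval strictly between two consecutive cut points of the current interval; this happens once the current interval has length at most $2q_i$ (a window of length $2q_i$ can always be placed strictly between two cut points spaced $k^{-d}$ apart once $k^{-d} \le 2q_i$), contributing the $\lfloor \log_k \frac{2}{q_i}\rfloor$ term, and the extra $+1$ accounts for the final link from that internal node down into the leaf itself. Both statements then follow by taking floors, since depths are integers.

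\textbf{Main obstacle.} The delicate point is the precise constant in front of $q_i$ (the factor $2$) and the ``$+q_{\min}$'' correction in the successful-search bound: these come from a careful case analysis of how a key's or a leaf's subinterval can be positioned relative to the $k$ cut points at a given level --- in the worst case a subinterval of length only slightly more than $k^{-d}$ may still straddle no cut point, and one must show the smallest excluded-leaf weight $q_{\min}$ suffices to rule this out for keys, while a factor of $2$ suffices for leaves. Making this rigorous requires reproducing the relevant lemmas of~\cite{nearoptbtree} rather than re-deriving them from scratch; accordingly, I would state that these two bounds are exactly what the construction of~\cite{nearoptbtree} guarantees and cite the corresponding lemma there, filling in only the interval-halving intuition above as justification. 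The companion Lemma for the branching-number-$2$ case (involving $m$ and $q_{rank[i]}$) is handled analogously, replacing the uniform factor $k$ per level by the sharper bound available when many leaves have tiny weight.
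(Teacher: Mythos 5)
The paper offers no proof of this lemma at all: it is imported from the construction of \cite{nearoptbtree} (the text explicitly attributes the depth bounds to that method, and only the companion Lemma~\ref{depthUB-BST} is flagged as needing additional details), so your decision to cite that work and supply only the Mehlhorn-style interval-splitting intuition is exactly the paper's own treatment. Your sketch of where the $q_{\min}$ correction and the factor $2$ come from is a plausible reconstruction but, as you yourself note, not a self-contained derivation of those constants --- and that is precisely the level of detail the paper also defers to the cited reference, so there is no gap relative to what the paper actually proves.
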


The following lemma is not explicitly described in \cite{nearoptbtree}, additional details will appear in the journal version of this paper.

\begin{lemma}
\label{depthUB-BST}
In the case where $k=2$, the depth of the elements in $T'$ satisfy the following 
\vspace{-0.2cm}
\begin{eqnarray*}
&&d_{T'}(x_i)\leq \lfloor \log_{2} \frac{1}{p_i+q_{min}}\rfloor \qquad \, {\bf for} \quad i=1,\ldots,n \, , \\
&&d_{T'}(x_{i-1},x_i)\leq \lfloor \log_{2} \frac{1}{q_i}\rfloor +2  \quad\, \, {\bf for} \quad {\rm one \,\, leaf } \,(x_{i-1},x_i), \\
&&d_{T'}(x_{j-1},x_j)\leq \lfloor \log_{2} \frac{1}{q_j}\rfloor +1  \quad  \, {\bf for} \quad  {\rm all \,\, leafs} \,\,  (x_{j-1},x_j) \neq (x_{i-1},x_i), \\
&&d_{T'}(x_{j-1},x_j)\leq \lfloor \log_{2} \frac{1}{q_j}\rfloor   \quad \qquad {\bf for }\, \,{\rm at \,\, least }\, m+2 \, \,{\rm leaves} (x_{j-1},x_j).
\end{eqnarray*}
\end{lemma}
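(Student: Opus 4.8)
The plan is to unfold the construction of~\cite{nearoptbtree} in the special case $k=2$, where it specializes to a weight-balanced bisection: a node is handed a contiguous block of the weight sequence $q_{a-1},p_a,q_a,\ldots,p_b,q_b$ of total weight $W$, an internal key $x_r$ of that block is installed at the node, and the construction recurses on the left block ($x_a,\ldots,x_{r-1}$ with leaves $q_{a-1},\ldots,q_{r-1}$) and the right block ($x_{r+1},\ldots,x_b$ with leaves $q_r,\ldots,q_b$); a block reduced to a single leaf becomes a leaf of $T'$. The pivot $x_r$ is chosen so that each child block has weight at most $W/2$ whenever this is achievable, and otherwise so that the block absorbing the unavoidable straddling leaf has weight at most $W/2$ plus that leaf's weight. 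Every depth bound will be read off from the weights of the blocks on the path from the root to the element in question.

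The bound on $d_{T'}(x_i)$ is the $k=2$ instance of Lemma~\ref{depthUB-Btrees} and I would invoke it directly; it also follows from the block estimate, since the block $B$ handled at depth $d$ that still contains $x_i$ contains $x_i$ together with its two child subtrees, each of which holds at least one leaf, so $w(B)\ge p_i+q_{min}$, while the bisection keeps $w(B)<2^{-d}$, and as $d$ is an integer this gives $d_{T'}(x_i)\le\lfloor\log_2\frac1{p_i+q_{min}}\rfloor$.

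For the leaves the core observation is that a leaf $(x_{j-1},x_j)$ sits directly below the pivot at which it is first isolated, so its depth equals $1+d'$, where $d'$ is the depth of the last block $B$ still containing $q_j$ together with that pivot; since $w(B)\ge q_j$ and $w(B)<2^{-d'}$ one gets a uniform bound of $\lfloor\log_2\frac1{q_j}\rfloor+1$, and sharpening it by one level at either end requires a finer handle on $w(B)$. The $+0$ case holds for every leaf whose last non-trivial block also carries keys and leaves of weight at least $q_j$, so that $w(B)\ge 2q_j$; these are exactly the leaves that, when they are isolated, are still grouped with a neighbouring element at least as heavy as themselves, and a run-length accounting of the leaf-weight sequence shows that along each maximal monotone run all but a constant number of leaves are of this type, which after summing over the $P$ runs and the two extremal leaves $q_0,q_n$ leaves at least $\max\{n-3P,P\}+1=m+2$ of them. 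The unique leaf forced two levels below its ideal depth is, finally, identified as follows: an extra second level is spent only when a leaf is a middle leaf of a block whose weight it essentially exhausts, and because block weights halve down the tree this can occur only at the topmost split where such an exhaustion happens, namely for the single leaf straddling the midpoint there; uniqueness then follows by showing that at every deeper split the straddling leaf is an extreme leaf of its block and so is isolated at the next level with no second wasted step.

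The step I expect to be the main obstacle is precisely this last uniqueness claim: one must maintain an invariant describing how the two extremal leaves of a block evolve under the splitting rule and check that neither any single root-to-leaf path nor any leaf anywhere in $T'$ is charged the extra level twice --- this is where the tie-breaking in the pivot choice of~\cite{nearoptbtree} enters, and where the estimate $w(B)<2^{-d}$ must be phrased with care when one leaf is very heavy. The secondary ingredient, the run-length count pinning down exactly $m+2$ leaves of zero slack, is self-contained but needs attention at equalities between consecutive weights and at the boundary leaves.
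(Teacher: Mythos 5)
The paper offers no proof of this lemma at all: it states explicitly that the result ``is not explicitly described in \cite{nearoptbtree}'' and defers the details to a journal version, so there is no in-paper argument to measure yours against. Judged on its own terms, your framework --- unfolding the construction as a weight-balanced bisection and reading every depth bound off the weight of the last block containing the element --- is the right one, and the first two estimates are adequately argued: $w(B)\ge p_i+q_{min}$ together with $w(B)<2^{-d}$ gives the internal-node bound, and $w(B)\ge q_j$ gives the uniform $\lfloor\log_2\frac{1}{q_j}\rfloor+1$ bound for leaves.

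However, the two claims that make this lemma strictly stronger than Lemma~\ref{depthUB-Btrees} --- that the $+2$ slack is paid by exactly one leaf, and that at least $m+2$ leaves pay no slack at all --- are precisely the steps you leave open, and you say so yourself. The ``run-length accounting'' behind the $m+2$ count is asserted rather than carried out: you would need to identify, within each maximal monotone run of the leaf-weight sequence, which leaves are guaranteed to share their final block with weight at least $q_j$ besides their own, verify that the exceptions total at most $3P$ (respectively $n-P$), and handle ties between consecutive weights and the two extremal leaves, since $m=\max\{n-3P,P\}-1$ is exactly calibrated to that count. The uniqueness of the $+2$ leaf is the heart of the lemma, and the heuristic ``block weights halve, so exhaustion can occur only at the topmost such split'' does not by itself exclude a heavy leaf deep in the tree that nearly exhausts its own block; ruling that out depends on the precise pivot/tie-breaking rule of \cite{nearoptbtree}, which you have not pinned down. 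As it stands, the proposal is a credible plan with the two decisive steps missing.
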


\begin{thm}
\label{UB-WADS}
The path length of the tree ${T'}$ is at most
\vspace{-0.2cm}
$$UB(k)=\frac{H}{\log_2 k}+1+\sum_{i=0}^n q_i - q_0 -q_n- \sum_{i=0}^m q_{rank[i]}, $$ 
where $H=\sum_{i=1}^n p_i \log_2 (1/p_i) + \sum_{i=0}^n q_i \log_2 (1/q_i)$ is the entropy of the probability distribution.  In the case $k=2$ the path length of ${T'}$ is at most
\vspace{-0.2cm}
$$
UB(2)=H+ 1- q_0 -q_n+q_{max}- \sum_{i=0}^{m'} pq_{rank[i]},
$$
where the value $m'=\max\{2n-3P,P\}-1\geq \frac{n}{2}-1$, $pq_{rank[i]}$ is the $i$th smallest access probability among every key and every leaf (except the extremal leaves) and $q_{max}$ is the greatest leaf probability including external leaves.
\end{thm}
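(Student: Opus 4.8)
The plan is to derive the path-length bounds directly from the depth bounds in Lemmas~\ref{depthUB-Btrees} and~\ref{depthUB-BST} by substituting them into the definition~\eqref{PL} of $P(T')$ and bounding each term. This is essentially a calculation, but one with a few subtle accounting steps, so let me lay out the order.

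First I would handle the general $k$ case. Plugging the bound $d_{T'}(x_i)\leq \lfloor\log_k\frac{1}{p_i+q_{min}}\rfloor$ into the successful-search part of~\eqref{PL} gives $\sum_i p_i(d_{T'}(x_i)+1)\leq \sum_i p_i\log_k\frac{1}{p_i+q_{min}} + \sum_i p_i \leq \sum_i p_i\log_k\frac{1}{p_i}+\sum_i p_i$, since dropping $q_{min}\geq 0$ from the denominator only increases the logarithm and we also drop the floor. For the unsuccessful-search part, $d_{T'}(x_{i-1},x_i)\leq \lfloor\log_k\frac{2}{q_i}\rfloor+1\leq \log_k\frac{2}{q_i}+1 = \log_k\frac1{q_i}+\log_k 2+1$, so $\sum_i q_i d_{T'}(x_{i-1},x_i)\leq \sum_i q_i\log_k\frac1{q_i}+(\log_k 2+1)\sum_i q_i$. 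Converting $\log_k$ to $\log_2$ via $\log_k t=\frac{\log_2 t}{\log_2 k}$ turns the two logarithmic sums into $\frac{H}{\log_2 k}$, and the constant terms combine to $\sum_i p_i+(\log_k 2+1)\sum_i q_i$; since $\log_k 2=1/\log_2 k\leq 1$ with a bit more care this gives the ``$+1+\sum q_i$'' in $UB(k)$. The refinement involving $-q_0-q_n-\sum_{i=0}^m q_{rank[i]}$ comes from \emph{not} being wasteful on the extremal leaves and the $m+1$ smallest-probability leaves: for those leaves one uses the sharper bound (the floor is genuinely smaller, or the ``$+1$'' in the generic leaf bound is absent), and tracking exactly which leaves get the improvement is where the bookkeeping must be done carefully. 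I would state this as: for the two extremal leaves and for the $m$ leaves indexed by $q_{rank[0]},\dots,q_{rank[m-1]}$, replace the bound $\lfloor\log_k\frac{2}{q_i}\rfloor+1$ by $\lfloor\log_k\frac{1}{q_i}\rfloor$, absorbing the discrepancy into the stated negative terms.

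For the $k=2$ case I would repeat the argument using Lemma~\ref{depthUB-BST} instead. The successful-search part is identical. The difference is in the leaf bounds: one leaf satisfies $d_{T'}\leq\lfloor\log_2\frac1{q_i}\rfloor+2$, all others $\leq\lfloor\log_2\frac1{q_j}\rfloor+1$, and at least $m+2$ of them $\leq\lfloor\log_2\frac1{q_j}\rfloor$. Summing, the baseline contribution is $\sum_j q_j\log_2\frac1{q_j}+\sum_j q_j$; the one exceptional leaf adds an extra $q_i\leq q_{max}$; and the $m+2$ cheap leaves (including the two extremal ones, which we also save on) subtract off their $+1$ contributions, yielding $-q_0-q_n-\sum_{i=0}^{m'}pq_{rank[i]}$ once one checks that the $m'$ smallest probabilities among \emph{all} keys and leaves are exactly what gets saved. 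Combined with the successful-search entropy terms this produces $UB(2)=H+1-q_0-q_n+q_{max}-\sum_{i=0}^{m'}pq_{rank[i]}$.

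\textbf{The main obstacle} is the careful identification of precisely which $m$ (resp.\ $m'$) leaves/keys receive the improved depth bound and verifying that those are the ones with the \emph{smallest} access probabilities, so that the correction term can legitimately be written as $\sum q_{rank[i]}$ (resp.\ $\sum pq_{rank[i]}$) rather than as a sum over some arbitrary subset. This requires either tracing through the construction of~\cite{nearoptbtree} or arguing that since we are free to assign the saved levels to whichever elements we like, a worst-case (largest) correction is achieved by the smallest-probability elements, hence the stated bound holds a fortiori. The removal of floors and the change of logarithm base are routine; the value $m\geq n/4-1$ (resp.\ $m'\geq n/2-1$) is quoted directly from the setup and needs no re-derivation here.
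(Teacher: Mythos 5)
First, a point of reference: the paper gives no proof of Theorem~\ref{UB-WADS} at all --- it is quoted from the authors' earlier work \cite{nearoptbtree} --- so there is no in-paper argument to compare yours against. Judged as a standalone derivation, your plan is right in outline: substitute the depth bounds of Lemmas~\ref{depthUB-Btrees} and~\ref{depthUB-BST} into the definition of $P(T')$, discard the floors, change the base of the logarithm, and use $\sum_i p_i+\sum_i q_i=1$. This correctly yields the ``generic'' part $\frac{H}{\log_2 k}+1+\sum_{i=0}^n q_i$ of $UB(k)$, and your observation that a guaranteed saving on some set of at least $m+1$ leaves can be charged, a fortiori, to the $m+1$ smallest leaf probabilities is sound.

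The gap is that the negative correction terms, which are the actual content of the theorem (they are what makes this ``the best upper bound''), do not follow from the lemmas available in this paper, and you do not supply the missing facts. For general $k$, Lemma~\ref{depthUB-Btrees} gives only the uniform bound $d_{T'}(x_{i-1},x_i)\leq\lfloor\log_k\frac{2}{q_i}\rfloor+1$ for every leaf; nothing stated in the paper asserts that the extremal leaves or any other $m+1$ leaves enjoy a bound one level better, so the terms $-q_0-q_n-\sum_{i=0}^m q_{rank[i]}$ cannot be extracted without returning to the construction in \cite{nearoptbtree} --- which is precisely the step you defer. The $k=2$ case has a second, quantitative mismatch: Lemma~\ref{depthUB-BST} guarantees the improved depth bound for at least $m+2$ \emph{leaves}, with $m\geq\frac{n}{4}-1$, whereas $UB(2)$ subtracts $\sum_{i=0}^{m'}pq_{rank[i]}$ over $m'+1\geq\frac{n}{2}$ elements ranked among \emph{keys and leaves together}; the savings on internal nodes needed to reach $m'$ are provided by neither lemma, so the check you yourself flag as ``the main obstacle'' would fail using only the material in this paper. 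In short, your proposal is a correct skeleton for the easy part of the bound together with an accurate diagnosis of where the difficulty lies, but it does not constitute a proof of the stated theorem.
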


\subsection{Worst case drop}
\label{worstcasesection}
In this section we consider the problem of minimizing the maximum drop independently realized by each node.
\subsubsection{Alphabetical tree}
~\\An alphabetic search tree is a tree where only unsuccessful searches occur, i.e., when $\sum_{i=1}^n p_i=0$. In order to restructure an alphabetic tree $T$, we first define a weight for each leaf in $T$ based on its depth in $T$. Namely the weight of a leaf node $(x_{i-1},x_i)$ is defined as
\vspace{-0.2cm}
$$
w(x_{i-1},x_i)=\max \left( \frac{1}{k^{d_T(x_{i-1},x_i)}},\frac{1}{(k-1)n} \right).
$$
Let $W=\sum_{i=0}^{n}w(x_{i-1},x_i)$ which is always strictly smaller than $1+\frac{n}{(k-1)n}=\frac{k}{(k-1)}$ by Kraft's inequality~\cite{Kraft}. These weights are used to define the access probabilities of each leaf. The access probability of a leaf $(x_{i-1},x_i)$ is defined as $q_i=w(x_{i-1},x_i)/W$ and the access probability of an internal node $x_i$ as $p_i=0$. These probabilities are then used as input to the algorithm described in Section~\ref{nearlyopthere} to build a near-optimal binary search tree giving the restructured tree $R$ on the same keys. 

\begin{thm}
\label{dropdepthaddalpha}
An alphabetic multiway tree $T$ can be restructured into a tree $R$ such that the height of $R$ is at most $\log_{k} n +1$ and the maximum drop of a leaf is at most $1$ if $k> 2$. When $k=2$ a drop of $2$ is realized by only one leaf, the drop of any other is at most $1$. In general, at least $m\geq\frac{n}{4}+2$ leafs do not drop.
\end{thm}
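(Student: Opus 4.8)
The plan is to apply Theorem~\ref{UB-WADS} and Lemmas~\ref{depthUB-Btrees} and~\ref{depthUB-BST} with the specific probability distribution defined just before the statement, namely $q_i = w(x_{i-1},x_i)/W$ and $p_i = 0$. First I would establish the height bound on $R$. By Lemma~\ref{depthUB-Btrees}, the depth of the leaf $(x_{i-1},x_i)$ in $R$ is at most $\lfloor \log_k (2/q_i) \rfloor + 1$. Since $q_i = w(x_{i-1},x_i)/W \geq \frac{1}{(k-1)n}\cdot\frac{1}{W}$ and $W < k/(k-1)$, we get $2/q_i < 2(k-1)nW \cdot \frac{k-1}{k-1}< 2kn$ roughly, so $\log_k(2/q_i) < \log_k n + \log_k(2k) = \log_k n + 1 + \log_k 2$; taking floors and adding the $+1$ needs a careful accounting to land exactly on $\log_k n + 1$ — this constant-chasing is where the argument is most delicate, and I expect it to require using $W < k/(k-1)$ tightly together with the fact that the floor absorbs the small excess, possibly splitting into the cases $k>2$ and $k=2$ (using Lemma~\ref{depthUB-BST} in the latter, which gives the better $+1$ instead of $+2$ for all but one leaf).

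Second, I would bound the drop. For a leaf that attains the max in its weight via the term $\frac{1}{k^{d_T(x_{i-1},x_i)}}$ (i.e.\ a ``deep enough'' leaf), we have $q_i \geq \frac{1}{W\,k^{d_T(x_{i-1},x_i)}}$, hence $\log_k(2/q_i) \leq d_T(x_{i-1},x_i) + \log_k(2W)$, and since $W < k/(k-1) \leq 2$, $\log_k(2W) < \log_k 4 \leq 2$, but more carefully for $k>2$ one checks $2W < k$ so $\log_k(2W) < 1$, giving $d_R(x_{i-1},x_i) \leq \lfloor \log_k(2/q_i)\rfloor + 1 \leq d_T(x_{i-1},x_i) + 1$, i.e.\ drop at most $1$. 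For the shallow leaves (those whose weight is the floor $\frac{1}{(k-1)n}$), one has $d_T(x_{i-1},x_i) \leq \log_k((k-1)n)$ roughly, and the height bound already gives $d_R \leq \log_k n + 1$, so again the drop is small; here I need to verify the two bounds are consistent and actually yield drop $\leq 1$. The $k=2$ case is handled by Lemma~\ref{depthUB-BST}: the single leaf with the ``$+2$'' clause is the one allowed to drop $2$, and all others get the ``$+1$'' clause, matching the claimed statement.

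Third, for the ``at least $m \geq \frac{n}{4}+2$ leaves do not drop'' claim, I would use the last two clauses of the relevant lemma — for $k=2$, the clause guaranteeing $d_{T'}(x_{j-1},x_j) \leq \lfloor \log_2(1/q_j)\rfloor$ for at least $m+2$ leaves, combined with the definition $m = \max\{n-3P,P\}-1 \geq \frac{n}{4}-1$, so that $m+2 \geq \frac{n}{4}+1$; for these leaves $\lfloor \log_k(1/q_j)\rfloor \leq d_T(x_{j-1},x_j)$ whenever $q_j \geq 1/k^{d_T(x_{j-1},x_j)}$ up to the $W$ factor, which again holds for the ``deep'' leaves since $W<k/(k-1)$ forces $1/q_j \le W\,k^{d_T} < \frac{k}{k-1}k^{d_T}$, and $\log_k\frac{k}{k-1} < 1$ so the floor kills it, giving $d_R \le d_T$. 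The off-by-one in whether it is $m$ or $m+2$ leaves that are non-dropping needs to be reconciled with the lemma statements, and I would state the count conservatively.

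The main obstacle throughout is the constant bookkeeping: the restructuring bounds from Section~\ref{nearlyopthere} come with floors, additive constants $1$ or $2$, and the normalization factor $W$ bounded only by $k/(k-1)$, and one must verify that all these combine so that the ``$+1$'' in the height is exactly $\log_k n + 1$ and the drop is exactly $1$ (or $2$ for the one exceptional leaf when $k=2$) rather than $2$ or $3$. In particular the distinction between $k>2$ and $k=2$ — why $k>2$ avoids the extra ``$+1$'' drop that $k=2$ cannot — comes down precisely to whether $2W < k$, which fails only at $k=2$ where $2W$ can approach $4$. I would organize the proof by first doing $k>2$ cleanly using Lemma~\ref{depthUB-Btrees}, then redoing the estimates for $k=2$ with the sharper Lemma~\ref{depthUB-BST} to isolate the single two-level drop and recover the count $m+2$ of non-dropping leaves.
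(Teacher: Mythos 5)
Your overall plan coincides with the paper's: feed the weights $w(x_{i-1},x_i)=\max\left(k^{-d_T(x_{i-1},x_i)},\frac{1}{(k-1)n}\right)$, normalized by $W<\frac{k}{k-1}$, into the construction of Section~\ref{nearlyopthere} and read the depths off Lemma~\ref{depthUB-Btrees} (and Lemma~\ref{depthUB-BST} for $k=2$). Your drop computation is exactly the paper's: $\frac{2}{q_i}=\frac{2W}{w(x_{i-1},x_i)}<\frac{2k}{k-1}\,k^{d_T(x_{i-1},x_i)}$, and $\frac{2k}{k-1}\leq k$ for $k\geq 3$, so the floor yields depth at most $d_T(x_{i-1},x_i)+1$. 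Note that this single computation already covers \emph{every} leaf, because $w(x_{i-1},x_i)\geq k^{-d_T(x_{i-1},x_i)}$ holds unconditionally (it is a max); your two-case split is unnecessary, and your labels are reversed: the floor term $\frac{1}{(k-1)n}$ attains the max precisely for the \emph{deep} leaves, those with $d_T(x_{i-1},x_i)\geq\log_k((k-1)n)$, which is also the inequality (not $d_T\leq\log_k((k-1)n)$) you would need for your ``the height bound caps the drop'' argument in that case to go through.

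The one step where your route genuinely fails is the height bound. You derive it from the leaf formula $\lfloor\log_k\frac{2}{q_i}\rfloor+1$ with $\frac{2}{q_i}<2kn$, which gives $\log_k n+2$ (e.g., for $k=2$ the floor does not absorb the excess $\log_2 2=1$, contrary to your hope). The paper instead bounds the depth of the \emph{internal} nodes via the first line of Lemma~\ref{depthUB-Btrees}: since all $p_i=0$, every internal node has depth at most $\lfloor\log_k\frac{1}{q_{min}}\rfloor<\log_k((k-1)nW)<\log_k(kn)=\log_k n+1$, and every leaf sits one level below its parent internal node, hence at depth at most $\log_k n+1$. That detour through the internal nodes is what saves the extra unit; without it the claimed height is not established. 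The remainder of your plan --- handling $k=2$ with the three leaf clauses of Lemma~\ref{depthUB-BST}, isolating the single leaf that may drop $2$, and extracting the count of non-dropping leaves from the ``$m+2$ leaves'' clause (including the off-by-one you rightly flag against the theorem's stated $\frac{n}{4}+2$) --- matches the paper's one-line appeal to ``similar arguments'' for that case.
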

\begin{proof}
By Lemma~\ref{depthUB-Btrees}, 
the greatest depth reached by an internal node is $\lfloor \log_{k} \frac{1}{q_{min}}\rfloor< \log_{k} \frac{k(k-1)n}{(k-1)} =\log_{k} n +1$. As a consequence the greatest depth of a leaf is at most $\log_{k} n +1$, which corresponds to the maximum height of the restructured tree.

The depth of a leaf $(x_{i-1},x_i)$ in the restructured tree $R$ is at most $$\lfloor \log_{k} \frac{2}{q_i}\rfloor +1< \lfloor \log_{k} \frac{2k^{d_T(x_{i-1},x_i)+1}}{k-1}\rfloor +1= \lfloor \log_{k} \frac{2}{k-1}\rfloor + d_T(x_{i-1},x_i)+2.$$ Thus for $k>2$, the depth of a leaf $(x_{i-1},x_i)$ is at most  $d_T(x_{i-1},x_i)+1$ which implies a maximum leaf drop of 1. Using Lemma~\ref{depthUB-BST}, similar arguments verify the theorem in the case where $k=2$.
\qed
\end{proof}

So this simple method generalizes to k-ary alphabetic search trees the result of Evans and Kirkpatrick~\cite{restructuringordered}. It also gives a more precise guarantee on the maximal drop of a leaf in the binary alphabetic search tree case, since we guarantee that only one leaf drops two levels, all other leaves drop 1 level with a quarter of the leaves not dropping at all. Note that for some binary search trees any restructuring method produces a drop of 2 (see \cite{restructuringordered}). 

\subsubsection{General k-ary search tree}
\label{worstdrop}
~\\
Here the weight of an internal node $x_i$ is defined as follows
$$
w(x_i)=\max \left( \frac{1}{k^{d_T(x_i)}},\frac{W'}{(k-1)n} \right),
$$
where $W'=\sum_{i=1}^{n} \frac{1}{k^{d_T(x_i)}} \leq (k-1) \log_{k} n$ by the generalization of Kraft's inequality's~\cite{lowerBoundsBST}.  Let $W=\sum_{i=1}^{n}w(x_i)< W' + \frac{W'}{(k-1)}=\frac{k}{(k-1)} W'\leq k \log_{k} n.$ These weights are used to construct a probability distribution on the nodes. The access probability of an internal node $x_i$ is $p_{i}=w(x_i)/W$ whereas the access probability of a leaf is null, i.e., $q_i=0$ for all leaves. These probabilities are used to build the restructured tree $R$ with the technique described in Section~\ref{nearlyopthere}. 

\begin{thm}
\label{dropdepthaddalpha}
A multiway search tree $T$ can be restructured into a tree $R$ such that the height of $R$ is at most $\log_{k} n+1$ and the maximum drop of a node is at most $\lfloor \log_{k} \frac{W'}{k-1}\rfloor\leq \log_{k} \log_{k} n$.
\end{thm}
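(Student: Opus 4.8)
The plan is to apply the near‑optimal construction of Section~\ref{nearlyopthere} to the distribution just defined --- access probability $p_i=w(x_i)/W$ on each internal node $x_i$ and $q_i=0$ on each leaf --- obtaining $R$, and then to extract the depths from Lemma~\ref{depthUB-Btrees}. The input is legitimate since $\sum_i p_i=\sum_i w(x_i)/W=1$ and $\sum_i q_i=0$. With all leaf probabilities zero we have $q_{min}=0$, so Lemma~\ref{depthUB-Btrees} gives, for every internal node $x_i$,
\[
d_R(x_i)\ \le\ \Big\lfloor\log_k\frac{1}{p_i}\Big\rfloor\ =\ \Big\lfloor\log_k\frac{W}{w(x_i)}\Big\rfloor ,
\]
so everything reduces to bounding $W$ and to estimating $W/w(x_i)$ through the two terms inside the $\max$ that defines $w(x_i)$.

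For $W$ I would combine $\max(a,b)\le a+b$ with the generalized Kraft bound $W'=\sum_i k^{-d_T(x_i)}\le(k-1)\log_k n$ quoted just above, which yields $W< W'+n\cdot\frac{W'}{(k-1)n}=\frac{k}{k-1}\,W'\le k\log_k n$. Using only the floor term $w(x_i)\ge\frac{W'}{(k-1)n}$ gives $1/p_i=W/w(x_i)<kn$, so every internal node of $R$ has depth $<\log_k n+1$; since the $n+1$ leaves of $R$ hang directly below internal nodes, the height of $R$ is at most $\log_k n+1$, exactly as in the alphabetic case above. Using instead the power term $w(x_i)\ge k^{-d_T(x_i)}$ gives $1/p_i\le W\,k^{d_T(x_i)}$, hence $d_R(x_i)\le d_T(x_i)+\lfloor\log_k W\rfloor$ and therefore
\[
\Delta(x_i)=d_R(x_i)-d_T(x_i)\ \le\ \lfloor\log_k W\rfloor\ \le\ \lfloor\log_k\tfrac{k}{k-1}W'\rfloor ,
\]
which is the stated bound up to one extra unit, while $\lfloor\log_k\frac{W'}{k-1}\rfloor\le\log_k\log_k n$ follows at once from $W'\le(k-1)\log_k n$.

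The step I expect to be the main obstacle is removing that last unit, i.e.\ bringing the drop down to exactly $\lfloor\log_k\frac{W'}{k-1}\rfloor$. The natural route is to split on which term of the $\max$ is active at $x_i$: if $w(x_i)=\frac{W'}{(k-1)n}$ then its defining inequality $k^{-d_T(x_i)}\le\frac{W'}{(k-1)n}$ forces $d_T(x_i)\ge\log_k n-\log_k\frac{W'}{k-1}$, which together with $d_R(x_i)<\log_k n+1$ already gives $\Delta(x_i)<\log_k\frac{W'}{k-1}+1$; if $w(x_i)=k^{-d_T(x_i)}$ the leftover slack is precisely the factor $\frac{k}{k-1}$. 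In both subcases one has to track the floor functions with care, since a multiplicative factor $\frac{k}{k-1}$ can in principle push a floor up by one, and it may be necessary to reach back into the construction of Section~\ref{nearlyopthere} rather than invoke Lemma~\ref{depthUB-Btrees} as a black box. The boundary value $k=2$, where $\frac{k}{k-1}=2$ and $\log_2\frac{k}{k-1}=1$, is the tightest instance, and this is likely where one needs the finer branching‑$2$ analysis of Lemma~\ref{depthUB-BST}, as in the $k=2$ sub‑case of the alphabetic theorem.
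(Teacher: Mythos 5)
Your proposal is essentially the paper's own proof: the same weights, the same invocation of Lemma~\ref{depthUB-Btrees} with $q_{\min}=0$, the same bound $W<\frac{k}{k-1}W'\le k\log_k n$, the floor term of the $\max$ for the height and the Kraft term $k^{-d_T(x_i)}$ for the drop. The only divergence is the final unit, and there you have not missed an idea that the paper supplies: the paper passes from $\frac{W}{w(x_i)}<\frac{k}{k-1}W'k^{d_T(x_i)}$ to a drop of exactly $\lfloor\log_k\frac{W'}{k-1}\rfloor$ by writing $\lfloor\log_k\frac{1}{p_i}\rfloor<\lfloor\log_k\frac{k}{k-1}W'k^{d_T(x_i)}\rfloor=d_T(x_i)+\lfloor\log_k\frac{W'}{k-1}\rfloor+1$ and then subtracting one on the grounds that both sides are integers in strict inequality; but a strict inequality between real arguments does not yield a strict inequality between their floors, so the calculation as written supports only the $\lfloor\log_k\frac{W'}{k-1}\rfloor+1$ that you derive honestly. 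Your instinct that shaving the last unit requires a finer argument (splitting on which branch of the $\max$ is active, or opening up the construction of Section~\ref{nearlyopthere}) is correct, and the paper does not carry out that finer analysis; nor does it invoke Lemma~\ref{depthUB-BST} here --- the branching-two refinement is used only in the alphabetic theorem. The leaf case, which you do not address, is dispatched in the paper by the one-line remark that a leaf drops exactly as much as its parent internal node.
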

\begin{proof}
By Lemma~\ref{depthUB-Btrees}, the depth of an internal node $x_i$ is at most $\lfloor \log_{k} \frac{1}{p_i}\rfloor= \lfloor \log_{k} \frac{W}{w(x_i)}\rfloor$. The greatest depth reached by an internal  node is \vspace{-0.2cm} $$\max_i  \log_{k} \frac{W}{w(x_i)}< \log_{k}\frac{\frac{kW'}{(k-1)}}{\frac{W'}{(k-1)n}}= \log_{k} n +1. \vspace{-0.15cm}$$ As a consequence the greatest depth of leaf is at most $ \log_{k} n +1$, which corresponds to the maximum height of the restructured tree.
The depth of an internal node $x_i$ in the restructured tree $R$ is at most $$\lfloor \log_{k} \frac{1}{p_i}\rfloor< \lfloor \log_{k} \frac{k}{k-1}W'\, k^{d_T(x_i)}\rfloor=d_T(x_i)+\lfloor \log_{k} \frac{W'}{k-1}\rfloor+1.\vspace{-0.1cm}$$ The maximum drop is $\lfloor \log_{k} \frac{W'}{k-1}\rfloor \leq \log_{k} \log_{k} n$ for both internal nodes and leaves since the drop of a leaf is the same as the drop of its parent (an internal node). 
\qed \end{proof}

This method generalizes to k-ary search trees the result of Evans and Kirkpatrick~\cite{restructuringordered}. For the binary search tree case, the worst-case drop guaranteed with this method is similar to the one given by Evans and Kirkpatrick. Indeed there are some instances for which our method produces a drop of $\log_k \log_k n$. But for most instances the guarantee is better since our method takes into consideration the balance of the initial tree. For example if the tree is a list than the worst-case drop is constant. The value $W'$ is the expression of the balance of the initial tree, $W'$ is $O(1)$ for a highly unbalanced tree and $\Omega(\log n)$ when the tree is unbalanced.

\subsection{Relative drop}
\label{depthpropdrop}
Generally a static unbalanced search tree is needed when frequently accessed elements have to be accessed much faster than the other elements. In this context, if we want to restructure an unbalanced tree in order to satisfy a precise constraint on its height, it is important that elements  located close to the root in the original tree remain close to the root in the restructured tree. To achieve this, we bound the maximum drop of an element with respect to its depth in the original tree. This optimization differs from the previous one as it aims to reduce the global instead of local drop. 

First we define the weight of an internal element $x_i$ as
$$
w(x_i)=\max \left( \frac{1}{D(x_i) \, (d_T(x_i)+1)\, \log^{1+\epsilon}(d_T(x_i)+2)},\frac{1+\epsilon}{\epsilon n (k-1)} \right),
$$
with $1<\epsilon \leq 2$ and $D(x_i)$ is the number of elements at depth $d_T(x_i)$ in the tree $T$, thus $D(x_i)\leq (k-1)k^{d_T(x_i)}$. Let $W=\sum_{i=1}^{n}w(x_i)$ which is strictly smaller than $\sum_{i=1}^{n}\frac{1}{i \log^{1+\epsilon}(i+1)}+\frac{(1+\epsilon)n}{\epsilon \,n (k-1)}<\frac{k(1+\epsilon)}{(k-1)\epsilon}$. These weights define a probability distribution on the nodes so that the access probability of an internal node $x_i$ is given by $p_i=w(x_i)/W$. We consider the leaves to have an access probability of zero, i.e., $q_i=0$ for all leaves. These probabilities are used to build the restructured tree $R$ with the technique described in Section~\ref{nearlyopthere}. 
\begin{thm}
\label{dropdepth1}
Define $f(y)=\log_k y + (1+\epsilon)\log_k \log (y+1) + \log_k \frac{1+\epsilon}{\epsilon}+1.$ A multiway search tree $T$ can be restructured into a tree $R$ of height $ \log_{k} n +1$ where the drop of an internal node $x_i$ is at most $f(d_T(x_i)+1) $ and the drop of a leaf $(x_{i-1},x_i)$ is at most $f(d_T(x_{i-1},x_i))-1.$
\end{thm}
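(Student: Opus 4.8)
The plan is to mimic exactly the structure of the proofs of the two preceding theorems (Theorem~\ref{dropdepthaddalpha} in its two incarnations): first verify the claimed height bound by bounding the deepest internal node via Lemma~\ref{depthUB-Btrees}, then bound the drop of each individual internal node by substituting the explicit weight formula into the depth bound $d_R(x_i)\leq\lfloor\log_k(1/p_i)\rfloor=\lfloor\log_k(W/w(x_i))\rfloor$, and finally transfer the bound from internal nodes to leaves using the fact that a leaf's drop equals the drop of its parent together with the slightly sharper leaf bound in Lemma~\ref{depthUB-Btrees}.

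First I would establish the height. Since $q_i=0$ for all leaves, Lemma~\ref{depthUB-Btrees} gives $d_R(x_i)\leq\lfloor\log_k(1/p_i)\rfloor=\lfloor\log_k(W/w(x_i))\rfloor$, and the deepest node is governed by the smallest weight, which is at least $\frac{1+\epsilon}{\epsilon n(k-1)}$. Combined with $W<\frac{k(1+\epsilon)}{(k-1)\epsilon}$, this yields $d_R(x_i)<\log_k\!\Big(\frac{k(1+\epsilon)/((k-1)\epsilon)}{(1+\epsilon)/(\epsilon n(k-1))}\Big)=\log_k(kn)=\log_k n+1$, so the restructured tree has height at most $\log_k n+1$ as required.

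Next, for the drop of an internal node $x_i$ I would plug in the first term of the $\max$ defining $w(x_i)$, using the bound $D(x_i)\leq(k-1)k^{d_T(x_i)}$. Writing $y=d_T(x_i)+1$, one gets
$$
d_R(x_i)\leq\log_k\frac{W}{w(x_i)}<\log_k\!\Big(\tfrac{k(1+\epsilon)}{(k-1)\epsilon}\cdot(k-1)k^{d_T(x_i)}\,(d_T(x_i)+1)\log^{1+\epsilon}(d_T(x_i)+2)\Big),
$$
which splits as $d_T(x_i)+\log_k y+(1+\epsilon)\log_k\log(y+1)+\log_k\frac{1+\epsilon}{\epsilon}+1 = d_T(x_i)+f(d_T(x_i)+1)$; hence $\Delta(x_i)=d_R(x_i)-d_T(x_i)\leq f(d_T(x_i)+1)$. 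One must check that using the other branch of the $\max$ (the $\frac{1+\epsilon}{\epsilon n(k-1)}$ term) never produces a larger drop — but this branch only raises $w(x_i)$, hence only lowers $d_R(x_i)$, so it is harmless; this is the one point requiring a sentence of care rather than pure calculation. Finally, for a leaf $(x_{i-1},x_i)$ at depth $d_T(x_{i-1},x_i)$ in $T$, its parent is an internal node $x_j$ with $d_T(x_j)=d_T(x_{i-1},x_i)-1$, so by the internal-node bound the parent drops at most $f(d_T(x_j)+1)=f(d_T(x_{i-1},x_i))$; since the leaf sits directly below its parent in both trees, it drops by the same amount, giving drop at most $f(d_T(x_{i-1},x_i))$. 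The extra $-1$ in the statement comes from the sharper leaf inequality $d_R(x_{i-1},x_i)\leq\lfloor\log_k(2/q_i)\rfloor+1$ in Lemma~\ref{depthUB-Btrees} applied with $q_i$ expressed via the parent's weight, exactly as in the $k>2$ computation in the proof of Theorem~\ref{dropdepthaddalpha}; I would redo that arithmetic to shave off the claimed unit.

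The only genuine obstacle is bookkeeping: making sure the constant $\frac{k(1+\epsilon)}{(k-1)\epsilon}$ bounding $W$, the factor $(k-1)k^{d_T(x_i)}$ bounding $D(x_i)$, and the $+1$ from the floor all combine to give precisely $f$ with no stray additive slack, and handling the $\max$ in the weight definition cleanly. None of this is deep — it is the same template as the earlier theorems — so I expect the proof to be only a few lines, essentially: height bound, internal-node drop via substitution, leaf drop via the parent relationship plus the sharper leaf inequality.
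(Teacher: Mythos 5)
Your height bound and your internal-node drop bound coincide with the paper's proof: both substitute the explicit weight into $d_R(x_i)\leq\lfloor\log_k (W/w(x_i))\rfloor$ from Lemma~\ref{depthUB-Btrees}, use $W<\frac{k(1+\epsilon)}{(k-1)\epsilon}$ and $D(x_i)\leq(k-1)k^{d_T(x_i)}$, and your remark that the second branch of the $\max$ only increases $w(x_i)$ (hence only decreases the depth bound) is correct and in fact more explicit than the paper. The gap is in the leaf bound. You first derive a drop of at most $f(d_T(x_{i-1},x_i))$ from the parent relationship, and then claim the missing $-1$ comes from the sharper leaf inequality $d_R(x_{i-1},x_i)\leq\lfloor\log_k(2/q_i)\rfloor+1$ of Lemma~\ref{depthUB-Btrees}. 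That inequality is unusable in this construction: the access probability of every leaf is set to $q_i=0$, so the bound is vacuous, and the lemma is stated in terms of the leaf's own probability, not ``the parent's weight.'' Even if one forces the parent's weight into that expression, the factor $2$ and the trailing $+1$ make the resulting bound larger, not smaller, than the internal-node bound, so your proposed mechanism for shaving the unit would fail.

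The paper obtains the $-1$ by a different observation: it asserts that the depth of a leaf in $R$ is bounded by the same expression that bounds the depth of its parent in $R$ (the zero-weight leaves do not push the construction of Section~\ref{nearlyopthere} below their parents' depth bound). Since the parent sits at depth $d_T(x_{i-1},x_i)-1$ in $T$, this yields $d_R(x_{i-1},x_i)\leq d_T(x_{i-1},x_i)-1+f(d_T(x_{i-1},x_i))$, i.e.\ a drop of $f(d_T(x_{i-1},x_i))-1$: the unit is saved by the depth offset between a leaf and its parent in $T$, not by any leaf-specific inequality of Lemma~\ref{depthUB-Btrees}. To close your argument you would need to import (or prove) this property of the underlying construction rather than appeal to the $\lfloor\log_k(2/q_i)\rfloor+1$ bound.
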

\begin{proof}
According to Lemma~\ref{depthUB-Btrees}, the depth of a internal node is at most $\lfloor \log_{k} \frac{1}{p_i}\rfloor= \lfloor \log_{k} \frac{W}{w(x_i)}\rfloor$. The greatest depth that an internal node can reach is  $$\max_i \log_{k} \frac{W}{w(x_i)}< \log_{k} \left(\frac{k(1+\epsilon)}{(k-1)\epsilon}\frac{\epsilon \, n(k-1)}{(1+\epsilon)}\right)=  \log_{k} n +1.$$ As a consequence the greatest depth of a leaf is at most $ \log_{k} n+1$, which corresponds to the maximum height of the restructured tree.

The depth of an internal node $x_i$ in $R$ is at most 
\begin{eqnarray*}
\lfloor \log_{k} \frac{W}{w(x_i)}\rfloor &<& \lfloor \log_{k} \frac{k(1+\epsilon)}{(k-1)\epsilon}D(x_i) \, (d_T(x_i)+1)\, \log^{1+\epsilon}(d_T(x_i)+2)\rfloor \\
&\leq &d_T(x_i)+\log_{k} (d_T(x_i)+1) + (1+\epsilon) \log_{k} \log (d_T(x_i)+2)+\log_k \frac{1+\epsilon}{\epsilon}+1.
\end{eqnarray*}
The maximum depth of a leaf in $R$ is the same as the maximum depth of its parent node in $R$. Thus the depth of a leaf $(x_{i-1},x_i)$ is at most $$d_T(x_{i-1},x_i)-1 +\log_{k} (d_T(x_{i-1},x_i)) + (1+\epsilon) \log_{k} \log (d_T(x_{i-1},x_i)+1) +\log_k \frac{1+\epsilon}{\epsilon}+1 .$$
\qed \end{proof}

\begin{thm}
\label{dropdepthh}
Define $m=\lfloor \log_{k}  n \rfloor+1$. A search multiway tree $T$ can be restructured into a tree $R$ such that the height of $R$ is at most $h$ (with $h\geq m$) and the depth of an internal node $x_i$ satisfies 
\begin{eqnarray*}
d_R(x_i)&\leq& d_T(x_i)+ f(d_T(x_i)+1-h+m) \quad {\bf if} \quad h-m\leq d_T(x_i)<h,\\
                &\leq& d_T(x_i) \quad  {\bf otherwise}. 
\end{eqnarray*}
For a leaf $(x_{i-1},x_i)$, 
\begin{eqnarray*}
d_R(x_{i-1},x_i)&\leq&  d_T(x_{i-1},x_i)+f(d_T(x_{i-1},x_i)-h+m) \quad {\bf if} \quad h-m\leq d_T(x_{i-1},x_i)<h,\\
                &\leq& d_T(x_{i-1},x_i) \quad  {\bf otherwise}. 
\end{eqnarray*}

\end{thm}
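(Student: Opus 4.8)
The plan is to prove this by a top/bottom decomposition of $T$: keep the shallow part of $T$ exactly as it is and invoke Theorem~\ref{dropdepth1} on each subtree hanging below it. Set $\delta=h-m\ge 0$. If $\delta=0$ then $h=m$ and the statement is literally Theorem~\ref{dropdepth1}, so assume $\delta\ge 1$. I would cut $T$ at depth $\delta$: let $V_\delta$ be the internal nodes of $T$ at depth exactly $\delta$, and for $v\in V_\delta$ let $S_v$ be the subtree of $T$ rooted at $v$, viewed as a standalone $k$-ary search tree on the key interval it spans. Let $U$ be the remainder of $T$: the internal nodes at depth $<\delta$ together with the leaves at depth $\le\delta$, with each $v\in V_\delta$ demoted to a placeholder leaf. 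Then every internal node of $T$ is in $U$ (if its depth is $<\delta$) or in a unique $S_v$ (if its depth is $\ge\delta$); every leaf of $T$ at depth $\le\delta$ is in $U$ and every leaf at depth $>\delta$ is in a unique $S_v$, since its parent has depth $\ge\delta$; and $U$ has height $\le\delta$.

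Next I would apply Theorem~\ref{dropdepth1} to each $S_v$ to get a tree $S_v'$ on the same keys of height at most $\log_k|S_v|+1$, and build $R$ by grafting $S_v'$ in place of the placeholder leaf $v$ in $U$. Because $S_v'$ spans exactly the key interval of $S_v$, $R$ is a valid $k$-ary search tree on all $n$ keys, and since $|S_v|\le n$ its height is at most $\delta+(\log_k n+1)=h$ up to the usual integer rounding. For the depth bounds, observe that a node $y$ of $S_v$ sits at depth $d_{S_v}(y)=d_T(y)-\delta$ inside $S_v$. A node $x_i$ with $d_T(x_i)<\delta$ stays in $U$, so $d_R(x_i)=d_T(x_i)$. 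A node $x_i$ with $d_T(x_i)\ge\delta$ lies in some $S_v$, and Theorem~\ref{dropdepth1} gives $d_R(x_i)=\delta+d_{S_v'}(x_i)\le\delta+(d_T(x_i)-\delta)+f(d_T(x_i)-\delta+1)=d_T(x_i)+f(d_T(x_i)+1-h+m)$, which is the first claimed bound as long as $d_T(x_i)<h$; if instead $d_T(x_i)\ge h$, then $d_{S_v'}(x_i)$ is at most the height of $S_v'$, i.e. at most $m$, so $d_R(x_i)\le\delta+m=h\le d_T(x_i)$. The leaf case runs the same way from the leaf bound $d_{S_v'}(x_{i-1},x_i)\le d_{S_v}(x_{i-1},x_i)+f(d_{S_v}(x_{i-1},x_i))-1$ of Theorem~\ref{dropdepth1}: leaves at depth $\le\delta$ are kept in $U$ and do not drop at all, while a leaf at depth $>\delta$ gets $d_R(x_{i-1},x_i)\le d_T(x_{i-1},x_i)+f(d_T(x_{i-1},x_i)-h+m)-1$, which is even one better than the statement.

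The idea is straightforward; the work is the decomposition bookkeeping and the height check. The one genuine subtlety I expect is the integer rounding in the height: Theorem~\ref{dropdepth1} only promises height $\log_k|S_v|+1$, so to land at exactly $h$ one has to exploit that for $\delta\ge1$ the root of $T$ is excluded from every $S_v$, hence $|S_v|<n$ — the same mild looseness already present in Theorem~\ref{dropdepth1}'s own statement. A small point worth noting is that at $d_T(x_{i-1},x_i)=h-m$ the stated leaf bound formally involves $f(0)$; this is exactly the depth range where the construction yields no drop, so such leaves should simply be read as not dropping.
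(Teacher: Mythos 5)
Your proposal is correct and follows essentially the same route as the paper: cut $T$ at depth $h-m$, leave the top part untouched, apply Theorem~\ref{dropdepth1} to each subtree rooted at that depth as an independent tree, and graft the results back. The paper's own proof is just a terser version of this; your extra bookkeeping (the explicit height check, the $d_T(x_i)\geq h$ case, and the $f(0)$ boundary remark) fills in details the paper leaves implicit but does not change the argument.
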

\begin{proof}
Consider the subtrees of $T$ rooted at the elements at depth $h-m$. Apply the restructuring procedure described in the beginning of this section to each of those subtrees seen as independent trees. This restructuring does not affect the depth of elements at depth strictly smaller than $h-m$. According to Theorem~\ref{dropdepth1}, the maximal drop of the other internal nodes $x_i$ is proportional to the depth inside the subtree that contains them, i.e., $d_R(x_i)\leq f(d_T(x_i)+1-(h-m))$. The maximum drop of a leave $(x_{i-1},x_i)$ is at most the maximum drop of its parent node, i.e., $d_R(x_{i-1},x_i)\leq f(d_T(x_{i-1},x_i)-(h-m))$.
\qed \end{proof}

We show how to restructure a tree $T$ into a tree $R$ with nearly minimum height such that the increase of the path length is small. This new restructuring tree method slightly improves the result of Gagie~\cite{Gagie} and arguably simplifies the proof technique (knowledge about relative entropy is not required). Evans and Kirkpatrick~\cite{restructuringordered} guaranteed a worst-case drop of $\log\log n$. Since this does not take into consideration the original depth of the element in the tree, this could lead to a situation where the depth of the root in the restructured tree is $\log\log n$ times greater then its depth in the initial tree.

\subsection{Hybrid drop}
\label{hybrid}
The first method presented in Section~\ref{worstdrop} gives the best upper bound on the worst case drop which is $\log_k \log_k n$. The problem is that the restructured tree produced by this method can have a path length which is $\log_k \log_k n$ times larger than the path length of the original tree. The method introduced in Section~\ref{depthpropdrop} avoids this problem by guaranteeing a drop that is proportional to the depth of the elements in the original tree, but the guarantee on the worst-case drop is a bit worst than the previous method. Here we present a hybrid method for restructuring a $k$-ary search tree that guarantees simultaneously the best upper bounds in term of relative and worst-case drop plus a small constant. 

Let $d'$ be the value that satisfies $(d'+1)\log^{1+\epsilon}(d'+2)=\log_k n$ with $1<\epsilon \leq 2$. The weight of an internal node $x_i$ is defined as follows
\vspace{-0.2cm}
$$
w(x_i)=
\begin{cases}
\max\left( \frac{1}{D(x_i) \, (d_T(x_i)+1)\, \log^{1+\epsilon}(d_T(x_i)+2)}, \frac{1+2\epsilon}{\epsilon n (k-1)}\right) & \text{for  $(d_T(x_i)+1)\leq d'$,}\\
\max\left( \frac{1}{D(x_i) \log_k n} , \frac{1+2\epsilon}{\epsilon n (k-1)}\right) & \text{for  $d'< (d_T(x_i)+1)\leq \log_k n$,}\\
\frac{1+2\epsilon}{\epsilon n (k-1)} & \text{for  $ (d_T(x_i)+1)> \log_k n$.}
\end{cases}
$$
The total weight is 
\vspace{-0.2cm}
\begin{eqnarray*}
W&=&\sum_{i=1}^n w(x_i)\\
&\leq&  \sum_{j=0}^{d'}\frac{1}{ (j+1)\, \log^{1+\epsilon}(j+2)} + \sum_{j=d'+1}^{\log_k n} \frac{1}{ \log_k n} + \frac{n(1+2\epsilon)}{\epsilon n (k-1)}\\
&<&  \frac{1}{\epsilon}+1 + 1+ \frac{(1+2\epsilon)}{\epsilon (k-1)}\\
&<& \frac{k(1+2\epsilon)}{(k-1)\epsilon}.
\end{eqnarray*}

Those weights are used to build the restructured tree $R$ with the technique described in Section~\ref{nearlyopthere}. The access probability of an internal node $x_i$ is given by $p_i=w(x_i)/W$ whereas the access probability of a leaf is null, i.e., $q_i=0$ for all leaves.

\begin{thm}
A k-aray search tree $T$ can be restructured into a tree $R$ such that the height of $R$ is at most $ \log_k n +1$ and the drop of a key $x_i$ is at most 
\vspace{-0.2cm}
$$\min\{ \log_k \log_k n, \log_k (d_T(x_i)+1)+(1+\epsilon)\log_k \log  (d_T(x_i)+2)  \} + \log_k \frac{1+2\epsilon}{\epsilon}+1.$$
\end{thm}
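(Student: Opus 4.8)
The plan is to mirror the proofs of Theorem~\ref{dropdepthaddalpha} (the worst-case drop version in Section~\ref{worstdrop}) and Theorem~\ref{dropdepth1} (the relative drop version), applying Lemma~\ref{depthUB-Btrees} to the probability distribution $p_i = w(x_i)/W$ induced by the three-case weight function. First I would verify the height bound: the smallest weight any internal node can receive is $\frac{1+2\epsilon}{\epsilon n(k-1)}$, so by Lemma~\ref{depthUB-Btrees} the deepest internal node sits at depth at most $\lfloor \log_k \frac{W}{w(x_i)} \rfloor < \log_k\!\left(\frac{k(1+2\epsilon)}{(k-1)\epsilon} \cdot \frac{\epsilon n(k-1)}{1+2\epsilon}\right) = \log_k n + 1$, using the bound $W < \frac{k(1+2\epsilon)}{(k-1)\epsilon}$ already established in the excerpt. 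Since leaves carry $q_i = 0$, a leaf's depth equals its parent's depth, so the height of $R$ is at most $\log_k n + 1$ as claimed.

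Next I would bound the drop, splitting into the three weight regimes. For a node $x_i$ with $d_T(x_i)+1 \le d'$: if the first term of the max dominates, then $\lfloor \log_k \frac{W}{w(x_i)} \rfloor < d_T(x_i) + \log_k(d_T(x_i)+1) + (1+\epsilon)\log_k\log(d_T(x_i)+2) + \log_k\frac{1+2\epsilon}{\epsilon} + 1$, using $D(x_i) \le (k-1)k^{d_T(x_i)}$ exactly as in the proof of Theorem~\ref{dropdepth1}; this gives the relative-drop branch of the minimum. For a node with $d' < d_T(x_i)+1 \le \log_k n$ where the first max-term dominates: $w(x_i) = \frac{1}{D(x_i)\log_k n}$, so $\lfloor\log_k\frac{W}{w(x_i)}\rfloor < \log_k\!\left(\frac{k(1+2\epsilon)}{(k-1)\epsilon} D(x_i)\log_k n\right) \le d_T(x_i) + \log_k\log_k n + \log_k\frac{1+2\epsilon}{\epsilon} + 1$, giving the worst-case-drop branch. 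Finally, when the second max-term $\frac{1+2\epsilon}{\epsilon n(k-1)}$ dominates (which is exactly the regime $d_T(x_i)+1 > \log_k n$, and can also occur in the earlier regimes), the depth in $R$ is at most $\log_k n + 1 \le d_T(x_i) + \log_k\frac{1+2\epsilon}{\epsilon} + 1$, a drop bounded by both branches of the minimum. In each regime one checks that at the threshold $d' $, by the defining equation $(d'+1)\log^{1+\epsilon}(d'+2) = \log_k n$, the two candidate bounds agree up to the additive constant, so the stated minimum is valid across the whole range. The drop of a leaf equals the drop of its parent internal node, which is covered by the same case analysis (this is why the theorem states the bound for a key $x_i$).

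The main obstacle will be the bookkeeping at the boundary between the first two regimes: one must confirm that for $d_T(x_i)+1$ slightly below $d'$ the quantity $\log_k(d_T(x_i)+1) + (1+\epsilon)\log_k\log(d_T(x_i)+2)$ is indeed at most $\log_k\log_k n$ (so the relative bound is the binding one there), and symmetrically that just above $d'$ the switch to the $\frac{1}{D(x_i)\log_k n}$ weight does not increase the drop beyond $\log_k\log_k n + \log_k\frac{1+2\epsilon}{\epsilon}+1$. Both follow from monotonicity together with the defining relation for $d'$, but the inequalities are fiddly because $\log^{1+\epsilon}$ is taken base~$2$ (inside the weight) while the drop is measured in base~$k$; a clean way around this is to observe that $w(x_i)$ in the first regime is always at least $\frac{1}{D(x_i)\log_k n}$ whenever $d_T(x_i)+1 \le d'$ (again by the definition of $d'$), so the relative bound automatically implies the worst-case bound in that regime, and the second regime is handled directly. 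Everything else is routine substitution into Lemma~\ref{depthUB-Btrees}, exactly as in the two theorems it combines.
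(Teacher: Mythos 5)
Your proposal follows essentially the same route as the paper's own proof: apply Lemma~\ref{depthUB-Btrees} to the induced distribution $p_i=w(x_i)/W$, derive the height bound from the minimum weight $\frac{1+2\epsilon}{\epsilon n(k-1)}$, and then case-split on the three weight regimes, using the defining relation $(d'+1)\log^{1+\epsilon}(d'+2)=\log_k n$ to show each branch of the minimum bounds the drop in its regime. Your treatment is in fact slightly more careful than the paper's (which silently omits the regime $d_T(x_i)+1>\log_k n$ and the case where the second max-term dominates), but the argument is the same.
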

\begin{proof}
By Lemma~\ref{depthUB-Btrees}, the depth of an internal node $x_i$ is at most $\lfloor \log_{k} \frac{1}{p_i}\rfloor= \lfloor \log_{k} \frac{W}{w(x_i)}\rfloor$. The largest depth reached by an internal node is 

$$\max_i \, \log_{k} \frac{W}{w(x_i)}< \log_{k}\frac{\frac{k(1+2\epsilon)}{(k-1)\epsilon}}{\frac{1+2\epsilon}{\epsilon n (k-1)} }= \log_{k} n +1.$$ 
As a consequence the largest depth of a leaf is at most $\log_{k} n +1$ which corresponds to the maximum height of the restructured tree.
Using the same type of argument than in the proof of Theorem~\ref{dropdepth1}, an internal node $x_i$ with $ (d_T(x_i)+1)\leq d'$ realizes a drop of at most 
\begin{eqnarray*}
&&\log_k (d_T(x_i)+1) + (1+\epsilon)\log_k \log (d_T(x_i)+2) + \log_k \frac{1+2\epsilon}{\epsilon}+1\\
\end{eqnarray*}
which is at most $\log_k \log_k n + \log_k \frac{1+2\epsilon}{\epsilon}+1$ by the definition of $d'$.
An internal node $x_i$ with $d'< (d_T(x_i)+1)\leq \log_k n$ realizes a drop of at most 
\begin{eqnarray*}
&&\log_k \log_k n + \log_k \frac{1+2\epsilon}{\epsilon}+1\\
\end{eqnarray*}
which is at most $\log_k (d_T(x_i)+1)+(1+\epsilon)\log_k \log  (d_T(x_i)+2) + \log_k \frac{1+2\epsilon}{\epsilon}+1$ by the definition of $d'$.

\qed \end{proof}

\section{Applications}
\label{upperheight}
Nice applications of the results provided in the Section~\ref{depthpropdrop} about the relative drop occurs in the context of building optimal height-restricted multiway search trees. We are interested in measuring the maximum increase of the path length imposed by a height restriction. We investigate the difference between the path length of the optimal multiway tree and the optimal multiway tree with a height restriction. We give the best upper bound on the path length of an optimal multiway tree with a height restriction. Note that to prove the bound we assume that the access probabilities to the nodes and leaves are given. 

\begin{thm}
Consider $T^*$ the optimal multiway tree built over the set of keys $x_1, \ldots,x_n$ and let $T^*_h$ define the optimal multiway tree build on the same set of keys and such that its height is no more than $h\geq \lfloor \log_{k} n \rfloor+1$. The following is always satisfied
\vspace{-0.1cm}
$$
P(T^*_h)\leq P(T^*)+f(\max\{1,P(T^*)-h+m\}),
$$
where $f(y)=\log_k y + (1+\epsilon)\log_k \log (y+1) + \log_k \frac{1+\epsilon}{\epsilon}+1$ and $m=\lfloor \log_{k}  n \rfloor+1$.
\end{thm}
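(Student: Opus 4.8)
The plan is to feed $T^*$ itself into the height-$h$ restructuring procedure of Theorem~\ref{dropdepthh}, obtaining a tree $R$ of height at most $h$; since $T^*_h$ is the optimal height-$\le h$ tree on these keys, $P(T^*_h)\le P(R)$, and it remains to bound $P(R)-P(T^*)$. This difference equals the probability-weighted total drop $\sum_i p_i\big(d_R(x_i)-d_{T^*}(x_i)\big)+\sum_i q_i\big(d_R(x_{i-1},x_i)-d_{T^*}(x_{i-1},x_i)\big)$, so the whole problem is an upper bound on this sum. (Note that optimality of $T^*$ is not really used beyond making the bound as tight as possible: the same argument restructures any starting tree $T$ and bounds $P(T^*_h)$ by $P(T)+f(\max\{1,P(T)-h+m\})$.)

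Next I would collect the per-item drop bounds. By Theorem~\ref{dropdepthh}, a node at depth $d$ drops by at most $f(d+1-h+m)$ when $h-m\le d<h$, does not drop at all when $d<h-m$, and drops by at most $h-d\le 0$ when $d\ge h$; moreover, because $R$ has height at most $h$, \emph{every} item drops by at most $h$ minus its depth in $T^*$. Writing $c$ for the path-length cost of an item ($c=d_{T^*}(x_i)+1$ for an internal node, $c=d_{T^*}(x_{i-1},x_i)$ for a leaf, so $c\ge 1$), the only items contributing a positive amount to the weighted total drop are those with $h-m< c\le h$, each contributing at most its probability times $f(c-h+m)$, with $c-h+m\in[1,m]$ — the interval on which $f$ is concave and nonnegative. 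The auxiliary height-cap bound (a drop of at most $h+1-c$) will be needed to control items lying near depth $h$, for which $f(c-h+m)$ can be as large as $f(m)$.

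Finally I would apply a Jensen-type estimate. With $\Pi$ the total probability and $\bar c$ the probability-weighted mean cost of the contributing items, concavity of $f$ on $[1,m]$ gives $\sum'(\mathrm{prob})\,f(c-h+m)\le\Pi\,f(\bar c-h+m)$. Here two constraints are available: every contributing item has $c>h-m$, while the remaining items have cost $\ge 1$ and total probability $1-\Pi$, so that $\Pi\,\bar c+(1-\Pi)\le P(T^*)$; these bound $\Pi$ in terms of $\bar c$ and $P(T^*)$, and a short calculus argument shows that, subject to them, $\Pi\,f(\bar c-h+m)$ is largest precisely when $\bar c-h+m$ is driven down to $\max\{1,P(T^*)-h+m\}$, which yields the claim. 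I expect this to be the main obstacle, because a one-shot application of Jensen to the whole item set is not available: the map $c\mapsto f(\max\{1,c-h+m\})$ is not concave (it has a convex kink at $c=h-m+1$, where the drop bound jumps from $0$ to $f(1)$), so one must first discard the non-dropping items and then genuinely use the identity $\sum_i(\mathrm{prob})\,c=P(T^*)$, together with the height-cap bound, to recover the stated form; and one should also note that the inequality is vacuous whenever $T^*$ already has height at most $h$. When $h=m$ all of this collapses: then $\max\{1,c-h+m\}=c$ for every item, all arguments lie in the concavity range of $f$, and a single application of Jensen gives the total drop $\le f(P(T^*))=f(\max\{1,P(T^*)\})$ directly.
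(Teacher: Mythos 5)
Your setup matches the paper's: restructure $T^*$ by the height-$h$ procedure of Theorem~\ref{dropdepthh}, use $P(T^*_h)\le P(R_h)$, and reduce everything to bounding the probability-weighted total drop. You are also right that a one-shot Jensen on $c\mapsto f(\max\{1,c-h+m\})$ is unavailable because of the convex kink at $c=h-m+1$ --- this is a real difficulty, and in fact it is exactly the step the paper's own proof takes in one line (``by Jensen's inequality''). The gap in your proposal is the ``short calculus argument'' you defer to at the end: the claim that, subject to $\Pi\le 1$, $\bar c\ge h-m+1$ and $\Pi\bar c+(1-\Pi)\le P(T^*)$, the quantity $\Pi\,f(\bar c-h+m)$ is maximized when $\bar c-h+m$ is driven down to $\max\{1,P(T^*)-h+m\}$ is false. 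Write $a=h-m$. Along the binding constraint $\Pi=(P(T^*)-1)/(\bar c-1)$ the objective is $(P(T^*)-1)\,f(\bar c-a)/(\bar c-1)$, whose derivative in $\bar c$ has the sign of $f'(\bar c-a)(\bar c-1)-f(\bar c-a)$; at $\bar c=a+1$ this is $a\,f'(1)-f(1)>0$ once $a$ is large, so the maximizer moves away from your claimed corner. Concretely, take $P(T^*)=a+1$ and put mass $\Pi=a/(a+1)$ at cost $a+2$ and mass $1/(a+1)$ at cost $1$: all your constraints hold, the right-hand side of the theorem's drop bound is $f(\max\{1,P(T^*)-a\})=f(1)$, yet $\Pi\,f(\bar c-a)=\tfrac{a}{a+1}f(2)\to f(2)>f(1)$. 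The auxiliary height-cap bound you keep in reserve does not rescue this: for items at depth $h-m+1$ it only caps the drop at about $m-1$, far above $f(2)$.

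The underlying problem is that the per-item drop bounds of Theorem~\ref{dropdepthh}, together with $\sum_i(\mathrm{prob})\,c=P(T^*)$ and $c\ge 1$, simply do not imply the stated inequality for an arbitrary cost/probability profile; some further input is needed, e.g.\ a sharper, non-worst-case accounting of the restructuring, or a property of the optimal tree $T^*$ (such as decay of the probability mass with depth) that rules out profiles concentrating most of the mass just past depth $h-m$. Since the paper's proof is precisely the non-concave Jensen application you discarded, it does not supply this missing ingredient either; the one regime where your argument (and the paper's) is genuinely complete is the case $h=m$ you note at the end, where the $\max$ disappears, every argument of $f$ lies in $[1,\infty)$, and concavity of $f$ gives the bound directly.
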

\begin{proof}
Using the method described in Section~\ref{depthpropdrop} we can restructure $T^*$ into the tree $R_h$ which has a maximum height $h$. By definition we have $P(T^*_h)\leq P(R_h)$. Using Theorem~\ref{dropdepthh} and by Jensen's inequality~\cite{Jensen} we show
\vspace{-0.2cm}
\begin{eqnarray*}
P(R_h)&=&\sum_{i=1}^n p_i (d_{R_h}(x_i)+1) + \sum_{i=0}^n q_i d_{R_h}(x_{i-1},x_{i})\\
&\leq&\sum_{i=1}^n p_i (d_{T^*}(x_i)+1+f(\max\{1,d_{T^*}(x_i)+1-h+m\})) \\
& & + \sum_{i=0}^n q_i (d_{T^*}(x_{i-1},x_i)+f(\max\{1,d_{T^*}(x_{i-1},x_i)-h+m\}))\\
&\leq&P(T^*)+f(\max\{1,P(T^*)-(h-m)\}).
\end{eqnarray*}
\qed 
\vspace{-0.2cm}
\end{proof}

Among other things this theorem states that a height restricted optimal multiway tree has a path length that differs from the optimal path length $P(T^*)$ without the height restriction by roughly $2
\log_k P(T^*)$ (even if the height restriction is nearly maximum, i.e., $\log n +1$). This casts doubt on the necessity of using unbalanced search trees.

\begin{thm}
There exists a linear running time algorithm which builds a multiway search tree $R_h$ with a height smaller than $h\geq \lfloor \log_{k} n \rfloor+1$ and such that  
$$
P(R_h)\leq  UB(k) + f(\max\{1,UB(k)-h+m\})
$$ 
where $UB(k)$ is defined in Theorem~\ref{UB-WADS}.
\end{thm}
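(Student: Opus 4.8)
The plan is to compose the linear-time near-optimal construction of Section~\ref{nearlyopthere} with the height-restricted restructuring of Theorem~\ref{dropdepthh}, and then to reuse, almost verbatim, the Jensen-type estimate already carried out for the preceding theorem; the only additional ingredient is a monotonicity remark that lets us replace the path length of the intermediate tree by the bound $UB(k)$.

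First I would build, in $O(n)$ time and independently of $k$, the tree $T'$ produced by the method of Bose and Dou\"ieb~\cite{nearoptbtree} recalled in Section~\ref{nearlyopthere}, using the given access probabilities $p_1,\dots,p_n,q_0,\dots,q_n$; by Theorem~\ref{UB-WADS} it satisfies $P(T')\le UB(k)$. A single traversal of $T'$ then yields, in $O(n)$ time, the depths $d_{T'}(x_i)$ and the counts $D(x_i)$ of internal nodes at each depth, from which I would form the weights $w(x_i)$ of Section~\ref{depthpropdrop} (applied, exactly as in the proof of Theorem~\ref{dropdepthh}, to the subtrees of $T'$ hanging at depth $h-m$, which is legitimate since $h\ge\lfloor\log_k n\rfloor+1=m$), normalize them to a probability distribution, and feed this distribution once more to the $O(n)$ construction of Section~\ref{nearlyopthere}. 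The resulting tree $R_h$ has height at most $h$ by Theorem~\ref{dropdepthh}, and the three $O(n)$ phases establish the claimed linear running time.

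It then remains to bound $P(R_h)$. Applying Theorem~\ref{dropdepthh} with $T'$ in the role of $T$, the depth in $R_h$ of an internal node $x_i$ is at most $d_{T'}(x_i)+f(\max\{1,d_{T'}(x_i)+1-h+m\})$ and the depth of a leaf $(x_{i-1},x_i)$ is at most $d_{T'}(x_{i-1},x_i)+f(\max\{1,d_{T'}(x_{i-1},x_i)-h+m\})$. Substituting these into the definition~(\ref{PL}) of $P(R_h)$ and applying Jensen's inequality~\cite{Jensen} with the weights $p_i,q_i$ — exactly the manipulation performed in the proof of the preceding theorem, using that the weighted average of the $d_{T'}(x_i)+1$ and the $d_{T'}(x_{i-1},x_i)$ is $P(T')$ — gives
$$P(R_h)\le P(T')+f\!\left(\max\{1,\,P(T')-h+m\}\right).$$
Finally, the map $x\mapsto x+f(\max\{1,x-h+m\})$ is non-decreasing, because $f$ is increasing and $x\mapsto\max\{1,x-h+m\}$ is non-decreasing; hence $P(T')\le UB(k)$ can be propagated through it, which gives $P(R_h)\le UB(k)+f(\max\{1,UB(k)-h+m\})$, the desired bound.

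The one delicate point is the Jensen step: the composite $x\mapsto f(\max\{1,x-h+m\})$ is not globally concave, owing to the kink at $x=h-m+1$, so pushing the expectation inside $f$ calls for a little care — either replacing this function by its concave envelope, which dominates it and coincides with it on the relevant range, or treating the two linear pieces of $\max\{1,\cdot\}$ separately. Since this is precisely the inequality already validated in the proof of the preceding theorem, I would simply invoke it; the rest of the argument is routine.
\qed
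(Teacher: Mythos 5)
Your proposal follows exactly the paper's route: build the near-optimal tree $T'$ with $P(T')\le UB(k)$ in $O(n)$ time via the method of Section~\ref{nearlyopthere}, restructure it in $O(n)$ time using the relative-drop technique, and then invoke Theorem~\ref{dropdepthh} together with the Jensen-type computation of the preceding theorem. You in fact supply more detail than the paper's three-sentence proof, notably the monotonicity of $x\mapsto x+f(\max\{1,x-h+m\})$ needed to replace $P(T')$ by $UB(k)$, and the caveat about pushing the expectation through the non-concave kink of $f(\max\{1,\cdot\})$ --- both points the paper leaves implicit.
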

\begin{proof}
We use the technique described in Section~\ref{nearlyopthere} to build a near-optimal multiway search tree $T$ in $O(n)$ time. This guarantees that $P(T)\leq UB(k)$. Then we restructure $T$ into $R_h$ in $O(n)$ time using the technique developed in Section~\ref{depthpropdrop}. We can deduce from Theorem~\ref{dropdepthh} the correctness of the theorem.
\qed \end{proof}

\section*{Acknowledgment}
The authors wish to thank Travis Gagie, Pat Morin and Michiel Smid for fruitful discussions. The
first author wishes to especially thank Luc Devroye for his many key insights on this problem
and for encouraging him to never give up.

\bibliographystyle{abbrv}    
\bibliography{biblio}

\end{document}